\newtheorem{thm}{Theorem}[section]
\newtheorem{cor}[thm]{Corollary}
\newtheorem{lem}[thm]{Lemma}
\newtheorem{lemma}[thm]{Lemma}
\newtheorem{remark}[thm]{Remark}
\newtheorem{proposition}[thm]{Proposition}
\newtheorem{corollary}[thm]{Corollary}
\newcommand{\R}{\mathbb{R}}
\newcommand{\E}{\mathbb{E}}
\newcommand{\Z}{\mathbb{Z}}
\newcommand{\s}{\sigma}
\title{Thermodynamic Limit for the Mallows Model on $S_n$}
\author{Shannon Starr}
\address{
Department of Mathematics\\
Hylan Building\\
University of Rochester\\
Rochester, NY 14627}
\email{\url{http://www.math.rochester.edu/people/faculty/sstarr/}}
\date{March 2, 2009}
\begin{document}

\onehalfspacing

\markright{}

\begin{abstract}

The Mallows model on $S_n$ is a probability distribution on permutations,
$q^{d(\pi,e)}/P_n(q)$, where $d(\pi,e)$ is the distance between $\pi$ and the identity
element,
relative to the Coxeter generators. Equivalently, it is the number
of inversions: pairs $(i,j)$ where $1\leq i<j\leq n$, but $\pi_i>\pi_j$.
Analyzing the normalization $P_n(q)$, Diaconis and Ram calculated the 
mean and variance of $d(\pi,e)$ in the Mallows model,
which suggests the appropriate $n \to \infty$ limit has 
$q_n$ scaling as $1-\beta/n$.
We calculate the distribution of the empirical measure in this limit,
$u(x,y)\, dx\, dy = \lim_{n \to \infty} \frac{1}{n} \sum_{i=1}^{n} \delta_{(i,\pi_i)}$.
Treating it as a mean-field problem, analogous to the Curie-Weiss model,
the self-consistent mean-field equations are 
$\frac{\partial^2}{\partial x \partial y} \ln u(x,y) = 2 \beta u(x,y)$, 
which is an integrable PDE, known as the hyperbolic Liouville equation.
The explicit solution also gives a new proof of formulas
for the blocking measures in the weakly asymmetric exclusion process,
and the ground state of the $\mathcal{U}_q(\mathfrak{sl}_2)$-symmetric
XXZ ferromagnet.

\vspace{8pt}
\noindent
{\small \bf Keywords:} Mallows model, random permutation, Liouville equation, ASEP, XXZ model.
\vskip .2 cm
\noindent
{\small \bf MCS numbers: 82B05, 82B10, 60B15} 
\end{abstract}

\maketitle

\section{Introduction and Main Results}

The Coxeter generators of the symmetric group $S_n$ are the transpositions $(1,2)$, $(2,3)$,
\dots, $(n-1,n)$.
The height of a permutation is defined distance to the identity element $e$,
$$
d(\pi,e)\, =\, \min\{k \geq 0\, :\, \exists\, \tau_1,\dots,\tau_k \in \{(1,2),\dots,(n-1,n)\}\ \text{ such that }\ \pi = \tau_1 \cdots \tau_k\}\, .
$$
More generally, $d(\pi_1,\pi_2) = d(\pi_2^{-1} \pi_1,e)$.
It is easy to see that $d(\pi,e) = d(\pi^{-1},e)$.
In fact, another formula is
$$
d(\pi,e)\, =\, |\{(i,j) \in \Z^2\, :\, 1\leq i<j\leq n\ \text{ and }\ \pi_i>\pi_j\}|\, .
$$
The pairs $(i,j)$ are called inversions of $\pi$.
In \cite{DiaconisRam}, Diaconis and Ram studied the Mallows measure, which is a probability measure on $S_n$ given by
$$
\mathbbm{P}_{n}^{q}(\pi)\, =\, \frac{q^{d(\pi,e)}}{P_n(q)}\, ,
$$
with $P_n(q)$ being a normalization constant.
Actually, Diaconis and Ram studied a Markov chain on $S_n$ for which the Mallows model gives the limiting
distribution.
This was followed up by another paper on a related topic by Benjamini, Berger, Hoffman
and Mossel (BBHM) \cite{BBHM} who related the biased shuffle and the Mallows model to the asymmetric
exclusion process and the ``blocking'' measures (of Liggett, see \cite{Liggett}, Chapter VIII, especially 
Example 2.8 and the end of Section 3).
They did this using Wilson's height functions \cite{Wilson}.
We will discuss this more in Section \ref{sec:Applications}.
For now, let it suffice that Diaconis and Ram identified the explicit formula for the normalization
which they remarked is the ``Poincar\'e polynomial'':
$$
P_n(q)\, =\, \prod_{i=1}^{n} \left(\frac{q^{i}-1}{q-1}\right)\, =\, [n]_q!\, =\, [n]_q \cdots [1]_q\, ,\quad \text{where}\quad
[n]_q\, =\, \frac{q^n-1}{q-1}\, .
$$
Further references for the statistical applications\footnote{
Independently, a similar $q$-deformed combinatorial formula was explained for a problem in quantum statistical
mechanics, the ground state of the ferromagnetic $\mathcal{U}_q(\mathfrak{sl}_2)$-symmetric XXZ 
quantum spin chain, by Bolina, Contucci and Nachtergaele \cite{BCN}.
We will comment more on this in Section 8.}
of the Mallows model can be found in their paper.

Note that, physically speaking, one would define a Hamiltonian energy function $H_n : S_n \to \R$ as
$$
H_n(\pi)\, =\, \frac{1}{n-1}\, \sum_{1\leq i<j\leq n} \mathbbm{1}_{(0,\infty)}(\pi_i-\pi_j)\, .
$$
In this case, one thinks of $\pi = (\pi_1,\dots,\pi_n)$ as some type of constrained spin system,
where each of the components $\pi_1,\dots,\pi_n$ are spins in $\{1,\dots,n\}$ as in a Potts model.
The choice of the normalization of the Hamiltonian is then standard for mean-field models.
One would be most interested in the free energy
$$
f_n(\beta)\, =\, -\frac{1}{\beta n}\, \ln \sum_{\pi \in S_n} e^{-\beta H_n(\pi)}\, .
$$
For our purposes, we prefer to consider the mathematically simpler ``pressure''
$$
p_n(\beta)\, =\, \frac{1}{n}\, \ln \left(\frac{1}{n!}\, \sum_{\pi \in S_n} e^{-\beta H_n(\pi)}\right)\, .
$$
(Note that, contrary to the usual conventions of statistical physics, we have divided the partition function, which is
$Z_n(\beta) = \sum_{\pi in S_n} e^{-\beta H_n(\pi)}$, by the infinite-temperature partition function $Z_n(0)=n!$,
which is equivalent to starting with a normalized {\it a priori} measure rather than counting measure on $S_n$.)
It is trivial to see that this is given precisely by the Poincar\'e polynomial described
by Diaconis and Ram:
$$
p_n(\beta)\, =\, \frac{1}{n} \ln\frac{P_n(e^{-\beta/(n-1)})}{n!}\, =\, \frac{1}{n} \ln \frac{[n]_{e^{-\beta/(n-1)}}!}{n!}\, .
$$
With this scaling, it is also easy to calculate the limit:
$$
p(\beta)\, =\, \lim_{n  \to \infty} p_n(\beta)\, =\, \int_0^1 \ln\left(\frac{1-e^{-\beta x}}{\beta x}\right)\, dx\, ,
$$
(which can be solved explicitly using the polylogarithm function).
From this one can calculate the mean and variance in the limiting Gibbs measure.
For instance, one can calculate $\E(d(\pi,e) - n(n-1)/4)^2 \sim n^3/72$ in the uniform
measure on $S_n$.

To go beyond the statistics of $d(\pi,e)$ it seems worthwhile to study the empirical measure of $\pi$:
$$
\frac{1}{n} \sum_{i=1}^{n} \delta_{(i,\pi_i)}\, ,
$$
which is a normalized measure on $\{1,\dots,n\} \times \{1,\dots,n\}$.
More specifically, this is a random measure.
Rescaling the discrete cube $\{1,\dots,n\} \times \{1,\dots,n\}$ to $[0,1] \times [0,1]$, 
it is easy to see that the random empirical measure converges, in probability,
to the non-random Lebesgue measure, when $\beta=0$ (the uniform case).
Our main theorem generalizes this result.

\begin{thm} 
\label{thm:main}
For any $\beta \in \R$,
$$
\lim_{\epsilon \downarrow 0} \lim_{n \to \infty} 
\mathbbm{P}_n^{1-\beta/n}\left\{\left|\frac{1}{n} \sum_{i=1}^{n} f(i/n,\pi_i/n) - \int_{[0,1] \times [0,1]} f(x,y) u(x,y)\, dx\, dy\right| > \epsilon\right\}\, =\, 0\, ,
$$
for every continuous function $f : [0,1] \times [0,1] \to \R$, where
$$
u(x,y)\, =\, \frac{(\beta/2) \sinh(\beta/2)}{\big(e^{\beta/4} \cosh(\beta[x-y]/2) - e^{-\beta/4} \cosh(\beta[x+y-1]/2)\big)^2}\, .
$$
\end{thm}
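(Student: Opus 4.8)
\medskip
\noindent\textbf{Proof strategy.}
I would prove Theorem~\ref{thm:main} as a mean-field large-deviation statement, parallel to the treatment of the Curie--Weiss model, in four steps: (i) express the Gibbs weight as a functional of the empirical measure and read off a rate functional; (ii) show the rate functional has a unique minimizer; (iii) identify its Euler--Lagrange equation as the hyperbolic Liouville equation; (iv) integrate that equation and match with the claimed formula. Write $\mu_n = \tfrac1n\sum_{i=1}^n\delta_{(i/n,\pi_i/n)}$ and note the exact identity $\mathcal{H}[\mu_n] = d(\pi,e)/n^2$, where
$$
\mathcal{H}[u]\, =\, \int_{[0,1]^4}\mathbbm{1}_{\{x_1<x_2\}}\mathbbm{1}_{\{y_1>y_2\}}\,u(x_1,y_1)\,u(x_2,y_2)\,dx_1\,dy_1\,dx_2\,dy_2\, ,
$$
so that for $q=1-\beta/n$ one has $q^{d(\pi,e)} = \exp\big(n^2\mathcal{H}[\mu_n]\ln(1-\beta/n)\big) = \exp\big(-\beta n\,\mathcal{H}[\mu_n] + O(1)\big)$. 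A Stirling count shows that the number of $\pi\in S_n$ whose empirical measure lies in a weak neighbourhood of a doubly stochastic density $u$ (one with $\int_0^1 u(x,y)\,dy = \int_0^1 u(x,y)\,dx = 1$) is $\exp\big(-n\int_{[0,1]^2}u\ln u + o(n)\big)$, the entropy $-\int u\ln u\le 0$ being maximized by $u\equiv 1$. Combining these gives matching large-deviation upper and (via coarse graining) lower bounds: $\mathbbm{P}_n^{1-\beta/n}(\mu_n\approx u)$ behaves like $\exp\big(-n(\mathcal{F}[u]-\inf\mathcal{F})+o(n)\big)$ with
$$
\mathcal{F}[u]\, =\, \beta\,\mathcal{H}[u] + \int_{[0,1]^2} u(x,y)\ln u(x,y)\,dx\,dy\, ,
$$
minimized over doubly stochastic densities. (As a check, $-\inf\mathcal{F}$ must equal the pressure $p(\beta)$ computed in the introduction.)

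\emph{Euler--Lagrange and Liouville.}
Attaching Lagrange multiplier functions $\lambda(x),\mu(y)$ to the two marginal constraints, stationarity of $\mathcal{F}$ reads
$$
\beta\Big(\textstyle\int_{s>x,\,t<y}u(s,t)\,ds\,dt + \int_{s<x,\,t>y}u(s,t)\,ds\,dt\Big) + \ln u(x,y) + 1\, =\, \lambda(x) + \mu(y)\, .
$$
Applying $\partial^2/\partial x\,\partial y$ annihilates $\lambda(x)+\mu(y)$, and the mixed partial of each area integral equals $-u(x,y)$, so the minimizer solves the hyperbolic Liouville equation $\partial_x\partial_y\ln u = 2\beta u$ with $u>0$ and both marginals $\equiv 1$. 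Liouville's classical integration gives the general solution $u(x,y) = f'(x)g'(y)/[\beta(f(x)+g(y))^2]$ for arbitrary $f,g$ with $f'g'/\beta>0$, invariant under the gauge transformations $f\mapsto f+r,\ g\mapsto g-r$; $f\mapsto rf,\ g\mapsto rg$; and $f\mapsto -1/f,\ g\mapsto -1/g$. The $x$-marginal condition becomes the separable ODE $\frac{d}{dx}\ln\frac{f(x)+g(0)}{f(x)+g(1)} = \beta$, and symmetrically in $y$; solving these, using the three gauge parameters to normalize, and simplifying with hyperbolic identities produces the $u$ in the statement. Equivalently --- and this is how I would actually present it --- one verifies directly that the stated $u$ solves $\partial_x\partial_y\ln u = 2\beta u$ and has both marginals identically $1$.

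\emph{From uniqueness to the theorem.}
If $\mathcal{F}$ has a unique minimizer $u^*$ then, by lower semicontinuity of $\mathcal{F}$ and compactness of the set of doubly stochastic measures, $\inf\{\mathcal{F}[u]-\mathcal{F}[u^*] : u\text{ at weak distance}\ge\eta\text{ from }u^*\}>0$ for every $\eta>0$, so $\mathbbm{P}_n^{1-\beta/n}(\mathrm{dist}(\mu_n,u^*)\ge\eta)\to 0$ exponentially fast. Since the test function $f$ is continuous on the compact square, weak convergence $\mu_n\to u^*$ in probability forces $\tfrac1n\sum_{i=1}^n f(i/n,\pi_i/n) = \int f\,d\mu_n\to\int_{[0,1]^2}f\,u^*\,dx\,dy$ in probability, which is precisely Theorem~\ref{thm:main} with $u=u^*$, and $u^*$ is the explicit density produced in Step (iv).

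\emph{Main obstacle.}
Steps (iii)--(iv) are essentially classical; the one pleasant surprise, integrability of the mean-field equation, is already visible in the statement. The genuine work is Steps (i)--(ii): making the coarse graining rigorous, controlling tightness and the near-diagonal part of $\mathcal{H}$, and, above all, proving that $\mathcal{F}$ has a \emph{unique} minimizer. Strict convexity of $u\mapsto\int u\ln u$ dispenses with the entropy, but $u\mapsto\mathcal{H}[u]$ is an indefinite quadratic form, so uniqueness must exploit the marginal constraints crucially --- the symmetrized kernel $\tfrac12\mathbbm{1}_{\{(x_1-x_2)(y_1-y_2)<0\}}$ restricted to perturbations with vanishing marginals --- or else be read off from the Liouville boundary-value problem, whose solutions modulo the Möbius gauge are pinned down by the two marginal conditions. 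This is the step I expect to demand the most care.
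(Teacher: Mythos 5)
Your proposal is correct in substance but takes a genuinely different route from the paper. The paper embeds the Mallows measure into a continuous-spin mean-field model on $\mathcal{X}=[0,1]^2$ with product {\it a priori} measure $f(x)g(y)\,dx\,dy$, invokes the Fannes--Spohn--Verbeure theorem (Proposition~\ref{prop:FSV}, itself built on de Finetti's theorem) to conclude that the Gibbs state converges to a mixture of i.i.d.\ product measures whose one-particle marginal solves the variational problem, and then transfers back to permutations via order statistics at the very end of Section~7. You instead propose a direct Sanov-type combinatorial LDP for the empirical measure of $\pi$ under $\mathbbm{P}_n^{1-\beta/n}$: the identity $\mathcal{H}[\mu_n]=d(\pi,e)/n^2$ and the Stirling count $\exp(-n\int u\ln u + o(n))$ for the number of permutations with coarse-grained marginal table near $u$ are both correct, and they yield the same rate functional $\mathcal{F}[u]=\beta\mathcal{H}[u]+\int u\ln u$ over doubly stochastic densities directly, avoiding de Finetti and the continuous auxiliary variables entirely. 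What the two approaches buy: the paper's route is ``softer'' and leans on an off-the-shelf theorem, but then must do work to pull the statement back from $(x_i,y_i)\in[0,1]^2$ to the actual permutation; yours is more elementary and self-contained and would give exponential concentration, but the coarse-graining, the handling of the discontinuous kernel near the diagonal, and especially the LDP lower bound all need to be written out carefully. Both approaches converge on the same Euler--Lagrange equation and hence the Liouville PDE, and you correctly identify uniqueness of the minimizer as the genuine crux --- this is exactly where the paper's Lemma~\ref{lem:IVP} (a Gronwall argument for the Liouville Cauchy problem) and Lemma~\ref{lem:marginal} do the work, extracting the boundary data $\phi(x)=\lim_{y\to0}u(x,y)$, $\psi(y)=\lim_{x\to0}u(x,y)$ from the marginal constraints and reducing uniqueness of the Euler--Lagrange problem to uniqueness of the Cauchy problem. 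Your ``Möbius gauge plus two marginal ODEs'' reduction is an equivalent way to pin the solution down; either way this step must be made rigorous before the variational argument closes.
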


Note that (one can show) the limit $\beta \to 0$ gives $1$.
The proof of Theorem \ref{thm:main} uses a rigorous version of mean-field theory,
as in the solution of the Curie-Weiss model.
An interesting feature is that the self-consistent mean-field equation leads us to the characterization of $u$ as
the solution of an integrable PDE
$$
\frac{\partial^2}{\partial x \partial y} \ln u(x,y)\, =\, 2 \beta u(x,y)\, .
$$
It is not unusual for mean-field problems to lead to integrable PDE's.
We demonstrate this briefly in the next section with the ubiquitous toy model,
the Curie-Weiss ferromagnet.

\section{Toy Model: The Curie-Weiss Ferromagnet}

We include this section merely to point out that mean-field problems often do lead to integrable PDE.
However the issue is serious: in fact there is a recent paper by Genovese and Barra which we recommend for more
details \cite{GenoveseBarra}.
Our approach merely summarizes their results (in our own words) as well as the earlier paper by Barra, himself \cite{Barra}.
The configuration space of the CW model is $\Omega_N = \{+1,-1\}^N = \{\sigma = (\sigma_1,\dots,\sigma_n)\, :\, \s_1,\dots,\s_n = \pm 1\}$.
For technical reasons, we choose the Hamiltonian as
$$
H_N(\s,t,x)\, =\, - \frac{t}{2N}\, \sum_{i,j=1}^{N} \s_i \s_j - x \sum_{i=1}^{N} \s_i\, ,
$$
we assume $t\geq 0$ and $x \in \R$.
Defining $m_N(\s) = N^{-1} \sum_{i=1}^{N} \s_i$, which takes values in $[-1,1]$, we see that
$$
H_N\, =\, - N \left(\frac{t m_N^2}{2} + x m_N\right)\, .
$$
Therefore, defining
$$
p_N(t,x)\, =\, \frac{1}{N} \ln \sum_{\s \in \Omega_N} e^{-H_N(\s,t,x)}\, ,
$$
we easily see that
$$
\frac{\partial}{\partial t} p_N(t,x)\, =\, \frac{1}{2} \langle m_N^2 \rangle_{N,t,x}\, ,
$$
and
$$
\frac{\partial^2}{\partial x^2} p_N(t,x)\, =\, N \left( \langle m_N^2 \rangle - \langle m_N\rangle^2\right)\, ,
$$
where
$$
\langle f \rangle\, =\, \langle f\rangle_{N,t,x}\, =\, \frac{\sum_{\s \in \Omega_N} f(\sigma) e^{-H_N(\s,t,x)}}{\sum_{\s \in \Omega_N} e^{-H_N(\s,t,x)}}\, .
$$
Actually it is easier to consider the ``order parameter,''
$$
u_N(t,x)\, =\, \langle m_N \rangle_{N,t,x}\, =\, \frac{\partial}{\partial x} p_N(t,x)\, ,
$$
from which $p_N(t,x)$ can be calculated by solving the ODE:
$$
\begin{cases}
\frac{\partial}{\partial x} p_N(t,x)\, =\, u_N(t,x) & \text{ for $x \in \R$,}\\
p_N(t,x) - |x| \to \frac{1}{2} t^2   & \text{ as $x \to \pm \infty$.}
\end{cases}
$$
Then we see that $u_N(t,x)$ satisfies the viscous Burgers equation (with velocity equal to the negative amplitude):
$$
\begin{cases}
\frac{\partial}{\partial t}\, u_N(t,x)\, =\, u_N(t,x)\, \frac{\partial}{\partial x}\, u_N(t,x) + \frac{1}{2N} \cdot \frac{\partial^2}{\partial x^2} u_N(t,x)
& \text{ for $t>0$ and $x \in \R$,}\\
u_N(0,x)\, =\, \tanh(x) 
& \text{ for $x \in \R$.}
\end{cases}
$$
This is an integrable PDE, using the Cole-Hopf transform.
See, for instance, Chapter 4 of Whitham, \cite{Whitham}.
Actually, this leads to a solution in terms of Gaussian integrals.
The analogous transform in spin-configuration notation is the Hubbard-Stratonovich transform:
$$
e^{N t m^2/2}\, =\, \int_{-\infty}^{\infty} \frac{e^{Nt (mx - x^2/2)}}{\sqrt{2\pi/t}}\, dx\, ,
$$
which ``linearizes'' the dependence of the Hamiltonian on $m_N$, in the exponential.
This trick is used to solve the Curie-Weiss model.
See, for example, Thompson \cite{Thompson}.

Note that in the $N \to \infty$ limit, one obtains $u(t,x) = \lim_{N \to \infty} u_N(t,x)$ being the vanishing-viscosity
solution of the inviscid Burgers equation.
Shocks correspond to phase transitions.
The Lax-Oleinik variational formula for solutions of hyperbolic conservation laws applies.
See for example, Section 3.4.2 of Evans \cite{Evans}.
In this context we claim that this is equivalent to the Gibbs variational formula, 
in the mean-field limit.
We review this next.

\section{Gibbs Variational Formula}
\label{sec:SCMFE}

Let us begin by considering a general problem in classical statistical mechanics.
Suppose that $\mathcal{X}$ is a compact metric space, and suppose that there
is a two-body interaction
$$
h : \mathcal{X} \times \mathcal{X} \to \R \cup \{+\infty\}\, .
$$
We assume that $h$ is bounded below.
Then for each $N \geq 2$, one can consider the mean-field Hamiltonian $H_N : \mathcal{X}^N \to \R \cup \{+\infty\}$
$$
H_N(x_1,\dots,x_N) = \frac{1}{N-1}\, \sum_{1\leq i<j\leq N} h(x_i,x_j)\, .
$$
Suppose that there is also an {\it a priori} measure $\mu_0$ on $\mathcal{X}$, which we assume is normalized so that
$$
\int_{\mathcal{X}} d\mu_0(x)\, =\, 1\, .
$$
Then the thermodynamic quantities are the partition function,
$$
Z_N(\beta)\, 
=\, \int_{\mathcal{X}^N} e^{-\beta H_N(x_1,\dots,x_N)}\, d\mu_0(x_1) \cdots d\mu_0(x_N)\, ,
$$
the pressure, 
$$
p_N(\beta)\, 
=\, \frac{1}{N}\, \ln Z_N(\beta)\, ,
$$
and the Boltzmann-Gibbs measure
$$
d\mu_N^{\beta}(x_1,\dots,x_N)\, 
=\, \frac{e^{-\beta H_N(x_1,\dots,x_N)}}{Z_N(\beta)}\, d\mu_0(x_1) \cdots d\mu_0(x_N)\, .
$$
Physically, it is more correct to consider the free energy rather than the pressure, $f_N(\beta) = - \frac{1}{\beta} p_N(\beta)$.
But we will consider $p_N(\beta)$, which seems slightly easier to handle, mathematically.

We will write $\mu_0^N$ for the measure $d\mu_0^N(x_1,\dots,x_N) = d\mu_0(x_1) \cdots d\mu_0(x_N)$.
Also, if $f$ is a function, then we use the short-hand $\mu(f)$ for $\int f d\mu$.
Then, according to the Gibbs variational principle, we have 
\begin{equation}
\label{eq:GibbsVariational}
p_N(\beta)\, =\, \max_{\mu_N \in \mathcal{M}_{+,1}(\mathcal{X}^N)} \frac{1}{N} \left[S_N(\mu_N, \mu_0^{\otimes N}) - \beta \mu_N(H_N)\right]\, ,
\end{equation}
where $S_N(\mu_N, \mu_0^{\otimes N})$ is the relative entropy (and $\mathcal{M}_{+,1}(\mathcal{X}^N)$ denotes all Borel probability measures
on $\mathcal{X}^N$)
$$
S_N(\mu_N, \mu_0^{\otimes N})\, =\, \begin{cases} \mu_0^N(\phi(d\mu_N/d\mu_0^N)) & \text{ if $\mu_N$ is absolutely continuous with respect to $\mu_0^N$,}\\
-\infty & \text{ otherwise,}
\end{cases}
$$
and $\phi(x) = - x \ln(x)$, which is $0$ if $x=0$.
Also, the unique $\mu_N$ maximizing the Gibbs variational formula (the ``arg-max'') is the Boltzmann-Gibbs measure $\mu_N^{\beta}$.

A natural ansatz for the optimizing measure is $\mu_N = \mu^N$, for some measure $\mu \in \mathcal{M}_{+,1}(\mathcal{X})$.
Probabilistically, this means that all the $x_1,\dots,x_N$ are independent and identically distributed.
Technically, this cannot usually be exact for finite $N$.
But it leads to a simpler formula because
$$
S_N(\mu^N,\mu_0^N)\, =\, N S_1(\mu,\mu_0)\quad \text{ and }\quad
\mu^N(H_N)\, =\, \frac{N}{2}\, \mu^2(h)\, ,
$$
and one hopes that the formula may become exact in the thermodynamic limit.
Mark Fannes, Herbert Spohn and Andre Verbeure proved that this approach is rigorous
in the $N \to \infty$ limit \cite{FSV}:

\begin{proposition}[Fannes, Spohn, Verbeure 1978]
\label{prop:FSV}
The limiting pressure exists, $p(\beta) = \lim_{N \to \infty} p_N(\beta)$,
and solves the variational problem
$$
p(\beta)\, =\, \max_{\mu \in \mathcal{M}_{+,1}(\mathcal{X})} [S_1(\mu,\mu_0) - \frac{\beta}{2} \mu^2(h)]\, .
$$
Moreover, any subsequential limit of the sequence $(\mu_N^\beta)$ is a mixture of infinite product
measures $\mu^{\infty}$, for $\mu$'s maximizing the right-hand-side of the formula above.
\end{proposition}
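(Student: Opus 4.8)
\medskip
\noindent\textbf{Proof proposal.}
I would prove the variational formula by matching bounds and then read off the structure of the Gibbs measures from the upper-bound argument. The lower bound is immediate: inserting the product ansatz $\mu_N=\mu^{\otimes N}$ into \eqref{eq:GibbsVariational} and using additivity of relative entropy over products together with $\tfrac1N\mu^{\otimes N}(H_N)=\tfrac1{N(N-1)}\binom N2\mu^2(h)=\tfrac12\mu^2(h)$ gives $p_N(\beta)\ge S_1(\mu,\mu_0)-\tfrac\beta2\mu^2(h)$ for every $N\ge2$ and every $\mu$, hence $\liminf_N p_N(\beta)\ge\sup_\mu\big[S_1(\mu,\mu_0)-\tfrac\beta2\mu^2(h)\big]$. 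The supremum is attained because $\mathcal M_{+,1}(\mathcal X)$ is weak-$*$ compact, $S_1(\cdot,\mu_0)$ is upper semicontinuous, and $\mu\mapsto\mu^2(h)$ is lower semicontinuous when $h$ is.

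For the upper bound I would use the elementary fact that, on an exchangeable measure $\mu_N$, one has $\tfrac1N S_N(\mu_N,\mu_0^{\otimes N})\le\tfrac1k S_k(\mu_N^{(k)},\mu_0^{\otimes k})$ for every $k\le N$ (equivalently, $\tfrac1N H(\mu_N\|\mu_0^{\otimes N})$ is non-decreasing in the number of coordinates, a consequence of the chain rule and convexity of relative entropy). Applying this to the exchangeable Gibbs measure $\mu_N^\beta$ and using $\tfrac1N\mu_N^\beta(H_N)=\tfrac12(\mu_N^\beta)^{(2)}(h)$ yields
\[
p_N(\beta)=\tfrac1N\big[S_N(\mu_N^\beta,\mu_0^{\otimes N})-\beta\,\mu_N^\beta(H_N)\big]\ \le\ \tfrac1k S_k\big((\mu_N^\beta)^{(k)},\mu_0^{\otimes k}\big)-\tfrac\beta2(\mu_N^\beta)^{(2)}(h)\qquad(k\le N).
\]
Pass to a subsequence along which $p_N(\beta)\to\limsup_N p_N(\beta)$ and, by weak-$*$ compactness of the $\mathcal M_{+,1}(\mathcal X^k)$ and a diagonal argument, all marginals $(\mu_N^\beta)^{(k)}$ converge weakly, say to $\nu_k$; the $\nu_k$ form a consistent exchangeable family, so by Hewitt--Savage $\nu_k=\int\mu^{\otimes k}\,\lambda(d\mu)$ for a probability measure $\lambda$ on $\mathcal M_{+,1}(\mathcal X)$. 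Taking $\limsup$ in the displayed inequality, using upper semicontinuity of $S_k(\cdot,\mu_0^{\otimes k})$ and weak convergence of the two-marginals, gives $\limsup_N p_N(\beta)\le\tfrac1k S_k(\nu_k,\mu_0^{\otimes k})-\tfrac\beta2\int\mu^2(h)\,\lambda(d\mu)$ for every fixed $k$; letting $k\to\infty$ and invoking the de Finetti entropy-rate identity $\tfrac1k S_k\big(\int\mu^{\otimes k}\lambda(d\mu),\,\mu_0^{\otimes k}\big)\to\int S_1(\mu,\mu_0)\,\lambda(d\mu)$ (a martingale-convergence consequence of tail-measurability of the directing measure) yields $\limsup_N p_N(\beta)\le\int\big[S_1(\mu,\mu_0)-\tfrac\beta2\mu^2(h)\big]\lambda(d\mu)\le\sup_\mu[\cdots]$. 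Together with the lower bound this establishes $\lim_N p_N(\beta)=p(\beta)=\max_\mu\big[S_1(\mu,\mu_0)-\tfrac\beta2\mu^2(h)\big]$. (Alternatively, since $\tfrac1NH_N=\tfrac12L_N^2(h)+O(1/N)$ is a function of the empirical measure $L_N=\tfrac1N\sum_i\delta_{x_i}$, the same limit follows from Sanov's theorem for $L_N$ under $\mu_0^{\otimes N}$, whose rate function is $-S_1(\cdot,\mu_0)$, together with Varadhan's lemma.)

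The statement about the Gibbs measures then follows by a squeeze: along \emph{any} subsequence on which all marginals of $\mu_N^\beta$ converge, with de Finetti mixing measure $\lambda$ as above, the same estimates give $p(\beta)=\lim_N p_N(\beta)\le\int\big[S_1(\mu,\mu_0)-\tfrac\beta2\mu^2(h)\big]\lambda(d\mu)\le\sup_\mu[\cdots]=p(\beta)$, so equality holds throughout; since the integrand never exceeds the supremum, $\lambda$ must be concentrated on the set of maximizers, i.e.\ the subsequential limit of $\mu_N^\beta$ is the mixture $\int\mu^{\otimes\infty}\lambda(d\mu)$ of infinite product measures over optimal single-site laws $\mu$.

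I expect the main obstacle to be the upper bound, and within it the passage from asymptotic exchangeability of $\mu_N^\beta$ to asymptotic factorization of the interaction energy: the one-site marginal of $\mu_N^\beta$ is not a product, so $(\mu_N^\beta)^{(2)}(h)$ cannot be replaced naively by $\big[(\mu_N^\beta)^{(1)}\big]^2(h)$, and it is exactly the de Finetti representation (or, equivalently, the Sanov--Varadhan route) that supplies the required decorrelation; the accompanying entropy-rate identity in the $k\to\infty$ step is the other genuinely nontrivial ingredient. A secondary technical point, relevant to Theorem \ref{thm:main}, is that the Mallows interaction $h(x,y)=\mathbbm1_{(0,\infty)}(x-y)$ is only semicontinuous, so the limits above should be performed after approximating $h$ monotonically by continuous functions; with that caveat the argument covers either sign of $\beta$.
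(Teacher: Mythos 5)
The paper does not prove Proposition~\ref{prop:FSV}: it is stated as a quoted result of Fannes, Spohn and Verbeure, with the reader referred to \cite{FSV}, and the surrounding discussion explicitly describes their method as ``using de Finetti's theorem~\dots\ and properties of the relative entropy.'' Your sketch reproduces precisely that route (lower bound by the product ansatz in the Gibbs variational principle, upper bound via the superadditivity of relative entropy for exchangeable marginals, Hewitt--Savage/de~Finetti representation of subsequential limits, the entropy-rate identity, and the final squeeze), so there is no ``paper's own proof'' to contrast it against --- you are essentially redoing the cited argument at sketch level, in the classical setting, which the paper itself declines to do. It is also worth noting the author's own caution in the text around this proposition: an earlier attempted simplification of the classical case with Kritchevski contained errors, so the technical steps here deserve care rather than a gloss.

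Two of those steps are genuinely nontrivial and you correctly flag them, but they remain gaps in the sketch as written. First, the de~Finetti entropy-rate identity $\tfrac1k S_k\bigl(\int\mu^{\otimes k}\lambda(d\mu),\mu_0^{\otimes k}\bigr)\to\int S_1(\mu,\mu_0)\,\lambda(d\mu)$ is a real theorem (one can reduce it to the affine decomposition of Shannon entropy over the ergodic/product components plus the linearity of $\nu_k(\log\rho_0^{\otimes k})$ in $\nu_k$), and invoking ``martingale convergence'' is not yet a proof; this is exactly the ingredient one could get wrong. Second, for the Mallows application $h$ is a sum of Heaviside indicators, so the passage $(\mu_N^\beta)^{(2)}(h)\to\nu_2(h)$ under weak convergence, the lower/upper semicontinuity needed for existence of a maximizer, and the Varadhan-lemma alternative all require either a monotone continuous approximation of $h$ (as you suggest) or an argument that the limiting two-point marginal puts no mass on the discontinuity set $\{x=x'\}\cup\{y=y'\}$; the sign of $\beta$ flips which semicontinuity is needed, so ``either sign'' is not automatic without this. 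Filling those two gaps would turn the sketch into a complete proof; as it stands it is a faithful outline of the FSV argument that the paper cites rather than proves.
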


\begin{remark}
Note that the Gibbs variational principle (\ref{eq:GibbsVariational}) is true in general for all Hamiltonians whether they are mean-field or not.
(See, for instance, Lemma II.3.1 from Israel's monograph \cite{Israel}, or any other textbook on mathematical
statistical mechanics, for a rigorous proof which also applies directly in the thermodynamic limit.)
But the product ansatz which seems to yield the formula from the proposition is not generally valid, 
since there are nontrivial correlations in the true Boltzmann-Gibbs state.
Nevertheless Fannes, Spohn and Verbeure proved the mean-field limit in the $N\to\infty$ limit, using de Finetti's theorem
(which states that all infinitely exchangeable measures are mixtures of product states) and properties of the relative
entropy.
\end{remark}

Because one has $\mu^2 = \mu \times \mu$, one replaces the linear form $\mu_N(H_N)$
by the nonlinear one $\mu^2(h)$.
Fannes, Spohn and Verbeure actually proved their theorem more generally for quantum statistical
mechanics models, such as the Dicke maser, but it also applies to classical models.
For the quantum models, one replaces de Finetti's theorem  by the non-commutative analogue, St\"ormer's theorem.
(See \cite{Aldous} and references therein for a detailed survey of de Finetti's theorem, and refer to Fannes, Spohn and Verbeure's
paper and references therein for the noncommutative analogue, which we will not need.)
With Eugene Kritchevski, we tried to find a simpler proof of the specialization of  Proposition \ref{prop:FSV}
to the classical case.
But there were several errors in our proof, which have been brought to my attention by Alex Opaku, to whom I am grateful.
Fortunately, Fannes, Spohn and Verbeure's original paper definitely does also apply to classical models.

\section{Application to the Mallows model}

\label{sec:Mallows,}
\label{sec:Liouville}

We take for $\mathcal{X}$ the unit square $[0,1] \times [0,1]$.
Suppose that $f,g : [0,1] \to \R$ are probability densities: $f,g\geq 0$ and $\int_0^1 f(x)\, dx = \int_0^1 g(y)\, dy = 1$.
For simplicity, later on, we also assume that there are constants $0<c<C<\infty$ such that
$c \leq f,g\leq C$.
Then we take the {\it a priori} measure to be
$$
d\mu_0(x,y)\, =\, f(x) g(y)\, dx\, dy\, .
$$
We take the interaction to be 
$$
h((x_1,y_1),(x_2,y_2))\, =\, \theta(x_1-x_2) \theta(y_2-y_1) 
+ \theta(x_2-x_1) \theta(y_1-y_2)\, ,
$$
where $\theta : \R \to \R$ is the Heaviside function,
$$
\theta(x)\, =\, \begin{cases} 1 & \text{ if $x>0$,}\\
0 & \text{ if $x<0$.}
\end{cases}
$$
Since $\mu_0$ is absolutely continuous with respect to Lebesgue measure, all $x_1,\dots,x_N$ and $y_1,\dots,y_N$ are distinct,
with probability 1.
(This is why we do not bother to specify $\theta$ at the discontinuity point $0$.)

Let $X_1<\dots<X_N$ and $Y_1<\dots<Y_N$ be any points.
Then for any $\sigma, \tau \in S_N$, the symmetric group, we have
$$
\frac{d\mu_N^{\beta}}{d\mu_0^N}((X_{\sigma_1},Y_{\tau_1}),\cdots,(X_{\sigma_N},Y_{\tau_N}))\,
=\, \mathbbm{P}_N^{\exp(-\beta/(N-1))}(\sigma^{-1} \tau)\, ,
$$
where $\mathbbm{P}_N^q$ is the Mallows measure on $S_N$.
So studying the limit of the $\mu_N^{\beta}$'s gives us direct information on the limit of $\mathbbm{P}^{1-\beta/N}_N$.
For any fixed $\sigma \in S_N$, the permutation $\sigma^{-1} \tau$ is uniform on $S_N$, if $\tau$ is.
Because of this, we have the following result for the marginal of $\mu_N^{\beta}$ on $(x_1,\dots,x_N)$,
$$
\int_{\mathcal{X}^N} U(x_1,\dots,x_N)\, d\mu_N^{\beta}((x_1,y_1),\dots,(x_N,y_N))\, 
=\, \int_{[0,1]^N} U(x_1,\dots,x_N) f(x_1)\cdots f(x_N)\, dx_1\cdots dx_N\, ,
$$
and the marginal on $(y_1,\dots,y_N)$,
$$
\int_{\mathcal{X}^N} U(y_1,\dots,y_N)\, d\mu_N^{\beta}((x_1,y_1),\dots,(x_N,y_N))\, 
=\, \int_{[0,1]^N} U(y_1,\dots,y_N) g(y_1)\cdots g(y_N)\, dy_1\cdots dy_N\, ,
$$
for all bounded, continuous functions $U : (0,1)^N \to \R$.
Enforcing these conditions on the marginals,  Proposition \ref{prop:FSV} yields the following:
\begin{equation}
\label{eq:FSV}
p(\beta)\, =\, \max_{\mu \in \mathcal{M}_{+,1}(f,g)} [S_1(\mu,\mu_0) - \beta \mu^2(h)]\, ,
\end{equation}
where $\mathcal{M}_{+,1}(f,g)$ is the set of all probability measures $\mu \in \mathcal{M}_{+,1}(\mathcal{X})$
such that $d\mu(x,y)$ has marginals $f(x) dx$ and $g(y) dy$.

Suppose that $\mu$ is any arg-max of the right-hand-side of (\ref{eq:FSV}).
Since we have chosen $\mu_0$ to be absolutely continuous with respect to Lebesgue measure on $\mathcal{X}$,
the same must be true of $\mu$. Otherwise the relative entropy would be $-\infty$.
So we can write
$$
d\mu(x,y)\, =\, u(x,y)\, dx\, dy\, .
$$
Then it is easy to see that the Euler-Lagrange equations for (\ref{eq:FSV}) are
\begin{equation}
\label{eq:EulerLagrange}
\ln u(x,y)\, =\, \ln f(x) + \ln g(y) + C - \beta \int_{\mathcal{X}} u(x',y') [\theta(x-x') \theta(y'-y) + \theta(x'-x) \theta(y-y')]\, dx' dy'\, ,
\end{equation}
for some constant, $C<\infty$.
Therefore, $u$ solves the equation
\begin{equation}
\label{eq:EL}
\begin{cases}
u(x,y)\, =\, \frac{1}{\mathcal{Z}} f(x) g(y) e^{- \beta \int_{\mathcal{X}} h((x,y),(x',y')) u(x',y')\, dx'\, dy'}
& \text{ for $(x,y) \in \mathcal{X}$,}\\
\int_0^1 u(x,y)\, dy\, =\, f(x) & \text{ for $x \in [0,1]$,}\\
\int_0^1 u(x,y)\, dx\, =\, g(y) & \text{ for $y \in [0,1]$,}
\end{cases}
\end{equation}
where $\mathcal{Z}$ is a normalization constant.

Since $u(x,y)$ solves an integral equation it can be differentiated both with respect to $x$ and $y$.
Doing so yields the partial differential equation
\begin{equation}
\label{eq:LiouvillePDE}
\frac{\partial^2}{\partial x\partial y} \ln u(x,y)\, =\, 2 \beta u(x,y)\, ,
\end{equation}
known as the hyperbolic Liouville equation.
This equation arises naturally in differential geometry,
related to the problem of choosing a metric on a given manifold.
I am very grateful to S.G.~Rajeev for important information regarding this PDE.
One of the facts he imparted is the symmetry of the differential equation
under the following general transformation:
\begin{equation}
\label{eq:symmetry}
\begin{split}
v(x,y)\, =\, F'(x) G'(y) u(F(x),G(y)) \qquad&\\
 \Rightarrow
\frac{\partial^2}{\partial x \partial y} \ln v(x,y)\, 
&=\, \frac{\partial^2}{\partial x \partial y} u(F(x),G(y)) + \frac{\partial}{\partial y}\left(\frac{F''(x)}{F'(x)}\right) 
+ \frac{\partial}{\partial x} \left(\frac{G''(y)}{G'(y)}\right)\\
&=\,  \frac{\partial^2}{\partial x \partial y} u(F(x),G(y))\\
&=\, 2 \beta F'(x) G'(y)  u(F(x),G(y))\\
&=\, 2 \beta v(x,y)\, .
\end{split}
\end{equation}
So, if $\frac{\partial^2}{\partial x\partial y} \ln u = \beta u$ then the same is true for $v(x,y) = F'(x) G'(y) u(F(x),G(y))$.

Our real goal is to solve the Euler-Lagrange equation (\ref{eq:EL}).
But as a first step,
we want to consider the Cauchy problem for (\ref{eq:LiouvillePDE}).
In other words, we want to consider the problem
\begin{equation}
\label{eq:Liouville}
\begin{cases}
\frac{\partial^2}{\partial x \partial y} \ln u(x,y)\, =\, 2 \beta u(x,y)
& \text{ for $(x,y) \in [0,L_1] \times [0,L_2]$,}\\
u(x,0)\, =\, \phi(x) & \text{ for $x \in [0,L_1]$,}\\
u(0,y)\, =\, \psi(y) & \text{ for $y \in [0,L_2]$,}
\end{cases}
\end{equation}
for some $L_1,L_2 > 0$ and $\phi : [0,L_1] \to \R$, $\psi : [0,L_2] \to \R$ both positive and continuous.

Note that $\frac{\partial^2}{\partial x \partial y}$ is a wave operator, with characteristics directed along $x$ and $y$.
Specifically, defining $\xi = (x+y)/\sqrt{2}$ and $\zeta = (x-y)/\sqrt{2}$, we have
$\frac{\partial^2}{\partial x \partial y} = \frac{1}{2} (\frac{\partial^2}{\partial \xi^2} - \frac{\partial^2}{\partial \zeta^2})$,
the usual wave operator.
Therefore, D'Alembert's formula for solutions of the wave equation allow us to reformulate (\ref{eq:Liouville})
as an integral equation, 
\begin{equation}
\label{eq:CauchyIntegral}
\ln u(x,y)\, =\, \ln \phi(x) + \ln \psi(y) - \ln \alpha + 2 \beta \int_{[0,x] \times [0,y]} u(x',y')\, dx'\, dy'\, ,
\end{equation}
which we prefer.
This equation is supposed to be solved for all $(x,y)  \in [0,L_1] \times [0,L_2]$.
We have introduced the number $\alpha = \phi(0)$, which we also assumed
equals $\psi(0)$, for consistency since both are supposed to give $u(0,0)$.
(Note that the initial surface, $([0,L_1] \times \{0\}) \cup (\{0\} \times [0,L_2])$, is {\it not}
a non-characteristic surface. This is the reason that our Cauchy problem
does not require initial data for the tangential derivative of $u$ even though the wave equation
is second order.)
We refer to Evans textbook for PDE's, (especially Section 2.4 on the wave equation and Section 4.6 on the Cauchy-Kovalevskaya theorem).

As we will see, the symmetry (\ref{eq:symmetry}) is the key to solving both the Euler-Lagrange equation (\ref{eq:EL}) and 
the Cauchy problem (\ref{eq:CauchyIntegral}).

\section{The Cauchy Problem}

We start with uniqueness for the Cauchy problem.

\begin{lemma}
\label{lem:IVP}
For any $L_1,L_2>0$, the Cauchy problem (\ref{eq:CauchyIntegral}) having $\phi=\psi=\alpha=1$
has at most one solution in the class of nonnegative integrable functions.
\end{lemma}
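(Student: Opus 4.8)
The plan is to run a two-variable Grönwall (Picard-type) argument on the difference of two solutions. Since we only need uniqueness, we are free to use estimates whose constants depend on the particular pair of solutions under consideration.

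First I would record the automatic regularity of any nonnegative integrable solution $u$ of (\ref{eq:CauchyIntegral}) in the case $\phi=\psi=\alpha=1$. Writing $F_u(x,y) = \int_{[0,x]\times[0,y]} u(x',y')\,dx'\,dy'$, the equation says $\ln u(x,y) = 2\beta F_u(x,y)$, i.e. $u = e^{2\beta F_u}$ (a.e.). Because $u$ is integrable, $0 \le F_u(x,y) \le \|u\|_{L^1([0,L_1]\times[0,L_2])} < \infty$, so in fact $u$ is bounded above and below by positive constants, and $F_u$ is continuous, hence $u$ agrees a.e. with a continuous function. In particular, given two solutions $u_1,u_2$, the associated $F_1 := F_{u_1}$ and $F_2 := F_{u_2}$ are both bounded on the rectangle by a common constant $M$ (take the larger of the two $L^1$ norms), and $u_1,u_2 \le K := e^{2|\beta|M}$.

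Next, set $G := F_1 - F_2$. Using $u_i = e^{2\beta F_i}$ together with the elementary bound $|e^{2\beta a} - e^{2\beta b}| \le 2|\beta|\, e^{2|\beta|\max(|a|,|b|)}\,|a-b|$ (mean value theorem), one gets $|u_1(x,y)-u_2(x,y)| \le 2|\beta| K\, |G(x,y)|$. Since $G(x,y) = \int_{[0,x]\times[0,y]} (u_1-u_2)$ and $G$ vanishes on the two coordinate axes, this yields the integral inequality
$$|G(x,y)| \;\le\; 2|\beta| K \int_{[0,x]\times[0,y]} |G(x',y')|\,dx'\,dy', \qquad (x,y)\in[0,L_1]\times[0,L_2].$$
Iterating this bound $n$ times and using $\int_{[0,x]\times[0,y]} (x'y')^k\,dx'\,dy' = (xy)^{k+1}/(k+1)^2$, one obtains $|G(x,y)| \le \big(2|\beta|K L_1 L_2\big)^n/(n!)^2 \cdot \|G\|_{\infty}$, where $\|G\|_\infty<\infty$ since $G$ is continuous on a compact set. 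Letting $n\to\infty$ forces $G\equiv 0$, hence $u_1 = e^{2\beta F_1} = e^{2\beta F_2} = u_2$.

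The only point requiring genuine care is the first step: verifying that an a priori merely nonnegative and integrable solution is automatically bounded (and has a continuous representative), so that $K$ and $\|G\|_\infty$ are finite and the iteration is legitimate. Everything after that is routine. (The case $\beta=0$ is trivial, since then the unique solution is $u\equiv 1$.)
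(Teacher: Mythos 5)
Your proof is correct and has the same skeleton as the paper's: establish a priori boundedness and continuity of any nonnegative integrable solution, then reduce uniqueness to a two-dimensional Volterra-type integral inequality for the difference of two solutions. In fact your quantity $G = F_1 - F_2$ is, up to the constant factor $2\beta$, exactly the paper's $z = \ln u_1 - \ln u_2$, so the integral inequality $|G(x,y)| \le 2|\beta|K\int_{[0,x]\times[0,y]}|G|$ you derive is the same one the paper derives for $z$. What differs is how the Volterra inequality is closed. The paper substitutes $t = xy$ to convert the two-dimensional inequality into a one-dimensional Volterra inequality with the logarithmically singular kernel $\ln(t/t')$, splits the logarithm, discards a sign-definite term, and then applies the differential form of Gr\"onwall's lemma to the auxiliary function $\zeta(t)=\int_0^t|\ln(t'/L_1L_2)|Z(t')\,dt'$ --- a route the author himself says ``can probably be improved.'' You instead iterate the two-dimensional inequality directly, using $\int_{[0,x]\times[0,y]}(x'y')^k\,dx'\,dy'=(xy)^{k+1}/(k+1)^2$ to get the superexponentially decaying bound $|G|\le (2|\beta|K L_1L_2)^n/(n!)^2\cdot\|G\|_\infty\to 0$. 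This is the standard Picard-iteration resolution; it avoids the change of variables, the singular kernel, and the auxiliary ODE entirely, and is arguably the cleaner way to finish. The two arguments are otherwise interchangeable, so you should regard yours as an alternate and simpler completion of the paper's own reduction.
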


\begin{proof}
Since $\phi=\psi=\alpha=1$, equation (\ref{eq:CauchyIntegral}) simplifies to
$$
\ln u(x,y)\, =\, 2 \beta \int_{[0,x] \times [0,y]}  u(x',y')\, dx'\, dy'\, .
$$
Assuming that $u$ is nonnegative and integrable, this implies that $\ln u$ is bounded and continuous.
Then, using these properties in the right-hand-side of the equation again (similarly as one does to prove elliptic regularity)
we deduce that $\ln u$ is continuously differentiable and globally Lipschitz.
In particular, it is continuous up to the boundary.

Now suppose that there are two solutions $u$ and $v$.
Letting $z = \ln u - \ln v$, we have
$$
z(x,y)\, =\, 2 \beta \int_0^x \int_0^y [1-e^{-z(x',y')}] u(x',y')\, dx'\, dy'\, .
$$
Since both $\ln u$ and $\ln v$ are bounded, we see that $z$ is as well.
Therefore, there exists a constant $K<\infty$ such that $|1 - e^{-z}| \leq K |z|$
for all values of $z$ in the range.
So we have
$$
|z(x,y)|\, \leq\, \beta K \|u\|_{\infty} \int_0^x \int_0^y |z(x',y')|\, dx'\, dy'\, .
$$
A version of Gronwall's lemma then implies that $z \equiv 0$.
We outline this now, although our argument can probably be improved.

Let $Z(t) = \sup\{ |z(x,y)|\, :\, (x,y) \in (0,L_1) \times (0,L_2)\, ,\ xy\leq t\}$.
Then we obtain, after making the change of variables $(x,y) \mapsto (x,t)$ where $t= xy$, and 
using Fubini-Tonelli to integrate over $x$ first,
$$
Z(t)\, \leq\,\beta K \|u\|_{\infty} \int_0^t \ln(t/t') Z(t')\, dt'\, .
$$
We rewrite this as 
$$
Z(t)\, \leq\, \beta K \|u\|_{\infty} \int_0^t [\ln(t/L_1L_2) - \ln(t'/L_1L_2)] Z(t')\, dt'\, .
$$
Since $\ln(t/L_1L_2)\leq 0$ for $t\leq L_1L_2$, and since $Z\geq 0$, we can drop
the term $\ln(t/L_1L_2) Z(t')$ in the integrand to obtain
$$
Z(t)\, \leq\, \beta K \|u\|_{\infty} \int_0^t |\ln(t'/L_1L_2)| Z(t')\, dt'\, .
$$
Finally, setting $\zeta(t) = \int_0^t |\ln(t'/L_1L_2)| Z(t')\, dt'$, this leads to
$$
\zeta'(t)\, \leq\, \beta K \|u\|_{\infty} |\ln(t/L_1L_2)| \zeta(t)\, .
$$
By Gronwall's inequality (see for example Appendix B of Evans \cite{Evans}),
we obtain
$$
\zeta(t)\, =\, e^{\beta K \|u\|_{\infty} \int_0^t |\ln(t'/L_1L_2)|\, dt'} \zeta(0)\, =\, e^{\beta K \|u\|_{\infty} (1+|\ln(t/L_1L_2)|) t/L_1L_2} \zeta(0)\, .
$$
But $\zeta(0) = 0$. Hence $\zeta(t)=0$ for all $t$.
This implies $Z(t)=0$ for all $t$ which implies $z(x,y) = 0$ for all $x,y$.
\end{proof}

Next we derive the explicit solution of (\ref{eq:CauchyIntegral}), for the case $\phi=\psi=\alpha=1$.

\begin{corollary}
\label{cor:Cauchy}
Suppose that $L_1,L_2>0$ and either $\beta\leq 0$ or $L_1L_2 < 1/\beta$.
Then the unique solution of the Cauchy problem (\ref{eq:CauchyIntegral}) with $\phi=\psi=\alpha=1$
is
$$
u(x,y)\, =\, (1 - \beta xy)^{-2}\, .
$$
\end{corollary}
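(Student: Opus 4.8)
\section*{Proof proposal}

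The plan is to lean on the uniqueness already established in Lemma~\ref{lem:IVP} and reduce the corollary to a direct verification. Lemma~\ref{lem:IVP} says that, for $\phi=\psi=\alpha=1$ and any $L_1,L_2>0$, the integral equation (\ref{eq:CauchyIntegral}) has \emph{at most one} nonnegative integrable solution; hence it suffices to check that the proposed function $u(x,y)=(1-\beta xy)^{-2}$ lies in the admissible class on $[0,L_1]\times[0,L_2]$ and satisfies the equation. No cleverness is needed beyond having the candidate in hand, which the statement has supplied; the role of the hypothesis ($\beta\leq 0$, or $\beta>0$ with $L_1L_2<1/\beta$) is exactly to make the candidate well-defined and the verification legitimate.

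First I would dispatch admissibility. Under the hypothesis one has $1-\beta xy>0$ throughout the rectangle, and moreover $1-\beta xy$ is bounded away from $0$ there (its minimum is $1$ when $\beta\leq 0$, and $1-\beta L_1L_2>0$ when $\beta>0$). Therefore $u=(1-\beta xy)^{-2}$ is continuous, strictly positive, and bounded on $[0,L_1]\times[0,L_2]$, in particular nonnegative and integrable. One also records the boundary values $u(x,0)=u(0,y)=1$ and $u(0,0)=1$, matching $\phi=\psi=\alpha=1$.

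The substance is the verification of (\ref{eq:CauchyIntegral}), which for $\phi=\psi=\alpha=1$ reads $\ln u(x,y)=2\beta\int_0^x\!\int_0^y u(x',y')\,dy'\,dx'$. I would evaluate the iterated integral by elementary antiderivatives: integrating in $y'$ first gives
\[
\int_0^y \frac{dy'}{(1-\beta x'y')^{2}}\,=\,\frac{y}{1-\beta x'y}\,,
\]
and then integrating in $x'$ gives $\int_0^x \frac{y\,dx'}{1-\beta x'y}=-\beta^{-1}\ln(1-\beta xy)$; both steps are valid precisely because $1-\beta x'y'$ stays positive on the domain, which the admissibility step secured. Multiplying by $2\beta$ yields $-2\ln(1-\beta xy)=\ln\!\big((1-\beta xy)^{-2}\big)=\ln u(x,y)$, as required. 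This is entirely routine and there is no genuine obstacle; the only point demanding care is the bookkeeping that keeps every logarithm and change of variables well-defined, and that is exactly what the condition on $L_1L_2$ is for.

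For orientation I might append one sentence on where the formula comes from: seeking a self-similar solution of the form $u(x,y)=g(xy)$ turns the Liouville PDE (\ref{eq:LiouvillePDE}) into the ODE $\big(s\,g'(s)/g(s)\big)'=2\beta g(s)$ with $g(0)=1$, which is solved by $g(s)=(1-\beta s)^{-2}$; the general Cauchy problem with arbitrary $\phi,\psi$ will then be obtained from this base case by applying the scaling symmetry (\ref{eq:symmetry}). But for the corollary itself the uniqueness-plus-verification route above is the shortest.
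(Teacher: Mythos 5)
Your proposal is correct and follows essentially the same route as the paper: invoke the uniqueness from Lemma~\ref{lem:IVP}, then verify that $(1-\beta xy)^{-2}$ is admissible under the stated hypotheses and satisfies the equation. The only (inessential) difference is in the verification step: the paper checks the pointwise PDE (\ref{eq:LiouvillePDE}) and then converts to the integral form (\ref{eq:CauchyIntegral}) via the fundamental theorem of calculus, whereas you verify (\ref{eq:CauchyIntegral}) directly by evaluating the iterated integral, which is a slightly more self-contained calculation and equally valid.
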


\begin{proof}
Uniqueness was proved in Lemma \ref{lem:IVP}, and it is trivial to check that this solves the PDE (\ref{eq:LiouvillePDE}).
Therefore, assuming that $u$ is integrable on $[0,L_1] \times [0,L_2]$, we may derive D'Alembert's formula
by standard calculus:
\begin{align*}
\ln u(x,y)\, 
&=\, \int_0^x \frac{\partial}{\partial x} \ln u(x',y)\, dx' + \ln \psi(y)\\
&=\, \int_{(0,x) \times (0,y)} \frac{\partial^2}{\partial x \partial y} \ln u(x',y')\, dx'\, dy'
+ \int_0^x \frac{\partial}{\partial x} \ln \phi(x')\, dx' + \ln \psi(y)\\
&=\, 2 \beta \int_{(0,x) \times (0,y)} u(x',y')\, dx'\, dy' + \ln \psi(y) + \ln \phi(x) - \ln \alpha\, .
\end{align*}
The only issue is to check integrability, which amounts to checking $\inf_{(x,y) \in [0,L_1] \times [0,L_2]} 1-\beta xy>0$.
This holds if and only if 
$\beta\leq 0$ or $L_1 L_2 < 1/\beta$.
\end{proof}

Let us briefly explain one approach to deriving this formula.
For nonlinear PDE's one always first guesses a scaling solution, in hopes of finding an explicit formula.
Because of the hyperbolic nature it makes sense to look for a solution $u(x,y) = U(xy)$
for some $U(z)$.
This leads to the ODE
$$
\frac{d}{dz} \ln U(z) + z \frac{d^2}{dz^2} \ln U(z)\, =\, 2 \beta U(z)\, ,
$$
which can also be expressed as
$$
\frac{d}{dz} \left(z\, \frac{d}{dz} \ln U(z)\right)\, =\, 2 \beta U(z)\, .
$$
The idea of using a power law solution is natural because the derivative of the logarithm results in a power law, itself.
Trying $U(z) = (1+cz)^p$ leads to 
$$
\ln U(z)\, =\, p \ln(1+cz)\quad \Rightarrow\quad
z \frac{d}{dz} \ln \phi(z)\, =\, \frac{cpz}{1+cz}\, =\, p - \frac{p}{1+cz}\quad 
\Rightarrow \quad 
\frac{d}{dz} \left(z\, \frac{d}{dz} \ln U(z)\right)\, =\, \frac{cp}{(1+cz)^2}\, .
$$
So, taking $p=-2$ and $c=-\beta$, this solves the equation, and gives
$U(z) = (1-\beta z)^{-2} \Rightarrow u(x,y) = (1-\beta x y)^{-2}$.
Finally, we are led to the solution of the general Cauchy problem.

\begin{corollary}
\label{cor:GenCauchy}
Suppose that $\phi, \psi : [0,1] \to \R$ are continuous and satisfy
$c\leq \phi,\psi\leq C$, for some constants $0<c<C<1$.
Also suppose that $\phi(0) = \psi(0) = \alpha$ for some $\alpha$.
Then the Cauchy problem (\ref{eq:CauchyIntegral})
has a solution if and only if $\beta \leq 0$ or $\int_0^{1} \phi(x)\, dx \int_0^{1} \psi(y)\, dy < \alpha/\beta$.
In case a solution exists, it is unique and equals
\begin{equation}
\label{eq:GenSol}
u(x,y)\, =\, \frac{\alpha \phi(x) \psi(y)}{(\alpha-\beta \Phi(x) \Psi(y))^{2}}\, ,
\end{equation}
where $\Phi(x) = \int_0^x \phi(x')\, dx'$ and $\Psi(y) = \int_0^y \psi(y')\, dy'$.
\end{corollary}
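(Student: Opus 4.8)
The plan is to reduce the general Cauchy problem to the flat case $\phi=\psi=\alpha=1$ settled in Corollary~\ref{cor:Cauchy}, using the change-of-variables symmetry~(\ref{eq:symmetry}). Since $\phi,\psi\geq c>0$, the functions $F(x)=\Phi(x)$ and $G(y)=\Psi(y)/\alpha$ are $C^1$ increasing bijections of $[0,1]$ onto $[0,L_1]$ and $[0,L_2]$, where $L_1=\int_0^1\phi$ and $L_2=\frac1\alpha\int_0^1\psi$. If $u_0(s,t)=(1-\beta st)^{-2}$ denotes the flat solution on $[0,L_1]\times[0,L_2]$, a one-line computation shows that $v(x,y):=F'(x)G'(y)\,u_0(F(x),G(y))$ is exactly the right-hand side of~(\ref{eq:GenSol}) and that $v(x,0)=\phi(x)$, $v(0,y)=\psi(y)$, $v(0,0)=\alpha$; this is our candidate solution.

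To prove existence I would verify~(\ref{eq:GenSol}) directly against the integral equation~(\ref{eq:CauchyIntegral}), which is preferable to the PDE~(\ref{eq:LiouvillePDE}) because $\phi,\psi$ are assumed only continuous. Substituting $r=\Phi(x')$ and $s=\Psi(y')$ factors the double integral into two elementary ones, $\int ds\,(\alpha-\beta r s)^{-2}$ followed by $\int dr\,(\alpha-\beta r\Psi(y))^{-1}$, yielding $2\beta\int_{[0,x]\times[0,y]}u = 2\ln\alpha - 2\ln(\alpha-\beta\Phi(x)\Psi(y))$; inserting this into~(\ref{eq:CauchyIntegral}) reproduces $\ln u=\ln\frac{\alpha\phi(x)\psi(y)}{(\alpha-\beta\Phi(x)\Psi(y))^2}$. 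The only thing the manipulations require (and all that is needed for $u$ to be bounded, hence integrable) is that $\alpha-\beta\Phi(x)\Psi(y)$ stay positive on $[0,1]^2$; since $\Phi,\Psi$ are nondecreasing this is equivalent to $\alpha-\beta\Phi(1)\Psi(1)>0$, i.e. to $\beta\leq 0$ or $\int_0^1\phi\int_0^1\psi<\alpha/\beta$, which is the asserted criterion. (This is also precisely the condition $L_1L_2<1/\beta$ of Corollary~\ref{cor:Cauchy} pulled through the change of variables.)

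For uniqueness I would rerun the Gronwall argument from the proof of Lemma~\ref{lem:IVP} essentially verbatim: if $w_1,w_2$ are nonnegative integrable solutions with the same data, the bootstrap there — now using $c\leq\phi,\psi\leq C$ to keep $\ln\phi,\ln\psi$ bounded and continuous — gives that each $\ln w_i$ is bounded and continuous, so $z=\ln w_1-\ln w_2$ obeys $|z(x,y)|\leq\beta K\|w_1\|_\infty\int_0^x\int_0^y|z|$, whence $z\equiv 0$; alternatively, transforming a second solution back through $F^{-1},G^{-1}$ (which have derivatives bounded above and below, so preserve integrability) reduces to Lemma~\ref{lem:IVP} as stated. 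Nonexistence when $\beta>0$ and $\int_0^1\phi\int_0^1\psi\geq\alpha/\beta$ then follows from the same reduction applied on sub-rectangles: if $w\geq 0$ were an integrable solution, then for every $(x_0,y_0)$ with $\Phi(x_0)\Psi(y_0)<\alpha/\beta$ the restriction of $w$ to $[0,x_0]\times[0,y_0]$ solves a Cauchy problem of the already-settled type, forcing $w$ to coincide with~(\ref{eq:GenSol}) there; but $(\alpha-\beta\Phi(x)\Psi(y))^{-2}$ is not integrable near the nonempty level set $\{\Phi(x)\Psi(y)=\alpha/\beta\}$ (an inverse-square transverse singularity, degenerating to the standard non-integrable corner singularity $\iint(au+bv)^{-2}\,du\,dv$ in the boundary case of equality), contradicting $w\in L^1$.

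The main obstacle is not conceptual but bookkeeping: pinning down the normalization of $F$ and $G$ so that $v(x,0)=\phi(x)$, $v(0,y)=\psi(y)$ and $v(0,0)=\alpha$ hold simultaneously — the relation $F'(0)G'(0)=\alpha$ is forced, which is why~(\ref{eq:GenSol}) carries no free parameter — and making the non-integrability step of the nonexistence argument precise in the borderline case $\int_0^1\phi\int_0^1\psi=\alpha/\beta$, where the singular set collapses to the single corner $(1,1)$ and the estimate must be carried out by hand. Everything else is a direct computation or an appeal to Lemma~\ref{lem:IVP} and Corollary~\ref{cor:Cauchy}.
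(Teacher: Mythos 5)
Your proof is correct and takes essentially the same route as the paper: reduce to the flat case $\phi=\psi=\alpha=1$ via the scaling symmetry (\ref{eq:symmetry}) and then invoke Lemma \ref{lem:IVP} and Corollary \ref{cor:Cauchy}. The paper's version is considerably terser — it just defines $v(x,y)=u(\Phi^{-1}(\alpha^{1/2}x),\Psi^{-1}(\alpha^{1/2}y))/[\alpha\Phi'(\cdot)\Psi'(\cdot)]$ and asserts that uniqueness, the existence criterion, and the formula ``all follow'' — whereas you actually carry out the direct verification against (\ref{eq:CauchyIntegral}) and, for the nonexistence half, make explicit the subrectangle/non-integrability argument (including the borderline corner case), which the paper leaves implicit.
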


\begin{proof}
Suppose that $u$ is any solution of (\ref{eq:CauchyIntegral}).
Let $v$ be given by
$$
v(x,y)\, =\, \frac{u(\Phi^{-1}(\alpha^{1/2} x),\Psi^{-1}(\alpha^{1/2} y))}{\alpha \Phi'(\Phi^{-1}(\alpha^{1/2} x))\Psi'(\Psi^{-1}(\alpha^{1/2} y))}\, .
$$
Then, by the symmetry (\ref{eq:symmetry}), $v$  is a solution of Liouville's PDE (\ref{eq:LiouvillePDE}) on
the domain $[0,\alpha^{-1/2} \Phi(1)]\times [0,\alpha^{-1/2} \Psi(1)]$.
But $v(x,0) = v(0,y) = 1$ because $\Phi' = \phi$ and $\Psi'=\psi$.
So uniqueness, the conditions for existence, and the formula for the solution all follow from Lemma \ref{lem:IVP} and Corollary \ref{cor:Cauchy}.
\end{proof}

\section{Solving the Euler-Lagrange Equation}

By general principles, we know that a solution of (\ref{eq:EL}) always exists: specifically, the optimizer in Proposition \ref{prop:FSV}.
Next we calculate it, and prove uniqueness.

\begin{lem}
\label{lem:marginal}
If $f=g=1$, then the unique solution of (\ref{eq:EL}) 
is given by (\ref{eq:GenSol}) for 
$$
\phi(z)\, =\, \psi(z)\, =\, \frac{\beta e^{-\beta z}}{1-e^{-\beta}}\, ,\quad
\Phi(z)\, =\, \Psi(z)\, =\, \frac{1-e^{-\beta z}}{1-e^{-\beta}}\quad \text{and}\quad
\alpha\, =\, \frac{\beta}{1-e^{-\beta}}\, .
$$
\end{lem}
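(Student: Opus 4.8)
The plan is to show that \emph{any} solution $u$ of the Euler--Lagrange system (\ref{eq:EL}) with $f=g=1$ must coincide with (\ref{eq:GenSol}) for the stated data; since such a $u$ exists by Proposition~\ref{prop:FSV}, this simultaneously settles existence and uniqueness. The steps are: (i) read off the axis values $u(x,0)$ and $u(0,y)$ directly from (\ref{eq:EL}); (ii) differentiate to land inside the Cauchy theory of Corollary~\ref{cor:GenCauchy}; (iii) use one marginal constraint to pin down the single remaining free constant.

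For step (i): with $f=g=1$ the integral equation in (\ref{eq:EL}) reads $\ln u(x,y) = C - \beta \int_{\mathcal{X}} h((x,y),(x',y'))\, u(x',y')\,dx'\,dy'$, whose right-hand side is continuous in $(x,y)$, so $u$ extends continuously to all of $[0,1]^2$ and in particular has well-defined restrictions to the two axes. Setting $y=0$ and using that $h((x,0),(x',y')) = \theta(x-x')$ for a.e.\ $(x',y')\in[0,1]^2$ (since $y'>0$ a.e.), the marginal constraint $\int_0^1 u(x',y')\,dy' = 1$ gives $\int_{\mathcal{X}} h((x,0),(x',y'))\, u(x',y')\,dx'\,dy' = \int_0^x 1\,dx' = x$, hence $u(x,0) = e^{-\beta x}/\mathcal{Z}$. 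Symmetrically $u(0,y) = e^{-\beta y}/\mathcal{Z}$, and the two formulas agree at the origin as required; put $\alpha = 1/\mathcal{Z} = u(0,0) > 0$.

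For step (ii): differentiating (\ref{eq:EulerLagrange}) once in $x$ and once in $y$ annihilates the $\ln f$, $\ln g$ and $C$ terms, and since $\frac{\partial^2}{\partial x \partial y}\int_{\mathcal{X}} h((x,y),(x',y'))\, u(x',y')\,dx'\,dy' = -2u(x,y)$ it recovers the Liouville PDE (\ref{eq:LiouvillePDE}). As $u$ is integrable, the D'Alembert computation in the proof of Corollary~\ref{cor:Cauchy} shows that $u$ restricted to $[0,1]^2$ solves the Cauchy integral equation (\ref{eq:CauchyIntegral}) with $\phi(x) = \alpha e^{-\beta x}$ and $\psi(y) = \alpha e^{-\beta y}$; these are continuous and bounded between positive constants, so Corollary~\ref{cor:GenCauchy} applies with $\Phi(z) = \Psi(z) = \int_0^z \alpha e^{-\beta z'}\,dz' = \frac{\alpha}{\beta}(1 - e^{-\beta z})$ and yields $u(x,y) = \alpha \phi(x)\psi(y)\big(\alpha - \beta \Phi(x)\Psi(y)\big)^{-2}$ (the existence hypothesis of the corollary being automatically met, since we already have a solution in hand).

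For step (iii): write $u(x,y) = \frac{\alpha \phi(x)}{\beta \Phi(x)}\,\frac{\partial}{\partial y}\big(\alpha - \beta \Phi(x)\Psi(y)\big)^{-1}$ and integrate in $y$ to get $\int_0^1 u(x,y)\,dy = \phi(x)\Psi(1)\big(\alpha - \beta \Phi(x)\Psi(1)\big)^{-1}$; imposing $\int_0^1 u(x,y)\,dy = 1$ and using the identity $\phi(x) + \beta \Phi(x) = \alpha$ collapses the condition to $\Psi(1) = 1$, i.e.\ $\frac{\alpha}{\beta}(1 - e^{-\beta}) = 1$, so $\alpha = \beta/(1-e^{-\beta})$. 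Substituting this value reproduces exactly the $\phi,\psi,\Phi,\Psi,\alpha$ of the statement, and the remaining marginal $\int_0^1 u(x,y)\,dx = 1$ then holds automatically by the $x\leftrightarrow y$ symmetry of $u$. The one genuinely delicate point is step (i): extracting the precise exponential boundary data from (\ref{eq:EL}) hinges on feeding the marginal constraints into the integrand, not on the integral equation alone; everything after that is the Cauchy machinery of Section~6 together with one elementary integral.
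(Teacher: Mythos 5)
Your proof is correct and follows essentially the same route as the paper: extract the exponential boundary data $u(x,0),u(0,y)$ from (\ref{eq:EL}) using the marginal constraints (the paper takes limits $x\to0$, $y\to0$ rather than evaluating directly on the axes, a purely cosmetic difference), and then invoke the uniqueness of Corollary~\ref{cor:GenCauchy}. The one small divergence is the order in which the free constant is pinned down: the paper gets $\mathcal{Z}=(1-e^{-\beta})/\beta$ immediately from $\int_0^1\psi(y)\,dy=1$ (dominated convergence applied to the marginal at $x\to0$), whereas you first apply Corollary~\ref{cor:GenCauchy} with $\alpha$ still free and then integrate the closed form (\ref{eq:GenSol}) in $y$ to force $\Psi(1)=1$; both are valid, the paper's is just a touch shorter.
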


\begin{proof}
Suppose $u$ solves (\ref{eq:EL}). 
Note that $\lim_{x \to 0} h((x,y),(x',y')) = \theta(y-y')$ for all $(x',y') \in \mathcal{X}$.
By the dominated convergence theorem, this implies
$$
\lim_{x \to 0} \int_{\mathcal{X}} h((x,y),(x',y')) u(x',y')\, dx'\, dy'\, 
=\, \int_0^1 \int_0^1 \theta(y-y') u(x',y')\, dx'\, dy'\,
=\, \int_0^1 \theta(y-y')\, dy'\, =\, y\, ,
$$
where we used the fact that $\int_0^1 u(x',y')\, dx' = 1$ for all $y'$.
So
$$
\psi(y)\, =\, \lim_{x \to 0} u(x,y)\, =\, \frac{1}{\mathcal{Z}} e^{-\beta y}\, .
$$
Similar arguments lead to $\phi(x)\, =\, \lim_{y \to 0} u(x,y)\, =\, \frac{1}{\mathcal{Z}} e^{-\beta x}$.
Since $\int_0^1 u(x,y)\, dy = 1$ for all $x \in [0,1]$, it again follows from the dominated
converge theorem, taking the limit $x \to 0$, that $\int_0^1 \psi(y)\, dy$ must also be $1$.
So $\mathcal{Z} = (1-e^{-\beta})/\beta$.
Checking, the reader will easily see that this gives the stated value for $\phi$, $\psi$ and $\alpha$.
Integrating, it also leads to $\Phi$ and $\Psi$.

Uniqueness follows from uniqueness of the Cauchy problem, Corollary \ref{cor:GenCauchy}.
Since this is the only possible solution, and since a solution exists, this must be it.
\end{proof}

Substituting in, and simplifying leads to the formula
\begin{equation}
\label{eq:solution}
u(x,y)\, =\, \frac{(\beta/2) \sinh(\beta/2)}{\big(e^{\beta/4} \cosh(\beta[x-y]/2) - e^{-\beta/4} \cosh(\beta[x+y-1]/2)\big)^2}\, .
\end{equation}
Therefore, we arrive at the final formula.

\begin{cor}
As long as $c\leq f,g\leq C$ for some $0<c<C<1$, and $\int_0^1 f(x)\, dx = \int_0^1 g(y)\, dy = 1$,
the unique solution of (\ref{eq:EL}) is
$$
u(x,y)\, =\, \frac{(\beta/2) \sinh(\beta/2) f(x) g(y)}{\big(e^{\beta/4} \cosh(\beta[F(x)-G(y)]/2) - e^{-\beta/4} \cosh(\beta[F(x)+G(y)-1]/2)\big)^2}\, ,
$$
where $F(x) = \int_0^x f(x')\, dx'$ and $G(y) = \int_0^y g(y')\, dy'$.
\end{cor}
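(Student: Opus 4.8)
The plan is to deduce the general statement from the special case $f=g=1$, which is Lemma~\ref{lem:marginal}, via the coordinatewise change of variables $x\mapsto F(x)$, $y\mapsto G(y)$. Two features make this work: the interaction $h$ is invariant under any pair of strictly increasing reparametrizations of the two coordinates (it depends only on the orderings of $x_1,x_2$ and of $y_1,y_2$), and the equation (\ref{eq:EL}) is covariant under such a change of variables in the same way that the Liouville PDE (\ref{eq:LiouvillePDE}) enjoys the symmetry (\ref{eq:symmetry}).

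First I would record the elementary facts about $F$ and $G$. Because $c\leq f\leq C$, the primitive $F$ is a strictly increasing bi-Lipschitz homeomorphism of $[0,1]$ onto $[0,1]$ with $F(0)=0$ and $F(1)=1$ (and likewise for $G$), so $F^{-1}$ and $G^{-1}$ are Lipschitz. For a density $w(x,y)\,dx\,dy$ on $\mathcal{X}=[0,1]\times[0,1]$, the push-forward of $w\,dx\,dy$ under $(x,y)\mapsto(F(x),G(y))$ has density
$$
\widetilde{w}(\xi,\eta)\, =\, \frac{w(F^{-1}(\xi),G^{-1}(\eta))}{f(F^{-1}(\xi))\,g(G^{-1}(\eta))}\, ,
$$
and if the $x$-marginal of $w$ is $f$ then the $\xi$-marginal of $\widetilde{w}$ is the push-forward of $f(x)\,dx$ under its own distribution function $F$, i.e.\ the uniform density on $[0,1]$; similarly in $\eta$.

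Now let $u$ be any solution of (\ref{eq:EL}) and put $v=\widetilde{u}$. From (\ref{eq:EL}) itself one sees that $u$ is bounded above and bounded away from $0$: the integral in the exponent satisfies $0\leq\int_{\mathcal{X}}h((x,y),(x',y'))\,u(x',y')\,dx'\,dy'\leq 1$ since $0\leq h\leq1$ and $\int_{\mathcal{X}}u=1$, while $c^2\leq f(x)g(y)\leq C^2$ and $\mathcal{Z}\in(0,\infty)$; hence $v$ is bounded above and away from $0$ as well. Substituting $x=F^{-1}(\xi)$, $y=G^{-1}(\eta)$ into (\ref{eq:EL}), dividing by $f(F^{-1}(\xi))\,g(G^{-1}(\eta))$, and then changing variables $x'=F^{-1}(\xi')$, $y'=G^{-1}(\eta')$ in the integral, the Jacobian factors $1/(fg)$ turn $u(x',y')\,dx'\,dy'$ into $v(\xi',\eta')\,d\xi'\,d\eta'$, while monotonicity of $F$ and $G$ gives $h((F^{-1}(\xi),G^{-1}(\eta)),(F^{-1}(\xi'),G^{-1}(\eta')))=h((\xi,\eta),(\xi',\eta'))$. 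Combined with the marginal computation of the previous paragraph, this shows $v$ solves (\ref{eq:EL}) with $f=g=1$. By Lemma~\ref{lem:marginal} and the simplification (\ref{eq:solution}), $v$ must equal the explicit function in (\ref{eq:solution}); in particular the solution of (\ref{eq:EL}) is unique, existence being already supplied by the optimizer of Proposition~\ref{prop:FSV}. Inverting the change of variables gives $u(x,y)=f(x)\,g(y)\,v(F(x),G(y))$, and substituting (\ref{eq:solution}) with $F(x),G(y)$ in place of $x,y$ produces exactly the claimed formula.

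The only non-routine point is the a priori two-sided bound on $u$ that is needed before one may invoke the uniqueness half of Lemma~\ref{lem:marginal} (which rests on Lemma~\ref{lem:IVP}) for the transformed function $v$; as noted, this bound drops out of (\ref{eq:EL}) directly. Everything else is the bookkeeping of the substitution, so I do not expect a genuine obstacle: one could, alternatively and equivalently, verify that $u(x,y)=f(x)g(y)v(F(x),G(y))$ solves the PDE (\ref{eq:LiouvillePDE}) using the symmetry (\ref{eq:symmetry}) and then check (\ref{eq:EL}) by computing $\int_{\mathcal{X}}h\,u$ directly, but the push-forward argument has the advantage of enforcing the marginal constraints automatically.
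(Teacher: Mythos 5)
Your proposal is correct and follows essentially the same route as the paper: define $v(x,y) = u(F^{-1}(x),G^{-1}(y))/\bigl(f(F^{-1}(x))\,g(G^{-1}(y))\bigr)$, use monotone invariance of $h$ together with the change-of-variables bookkeeping to show $v$ solves (\ref{eq:EL}) with $f=g=1$, and then invoke Lemma~\ref{lem:marginal}. The extra care you take regarding the bi-Lipschitz regularity of $F,G$ and the a priori two-sided bound on $u$ is left implicit in the paper but does not change the argument.
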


\begin{proof}
Suppose that $u$ is a solution of (\ref{eq:EL}) under the conditions stated. 
Define 
$$
v(x,y)\, =\, \frac{u(F^{-1}(x),G^{-1}(y))}{f(F^{-1}(x)) g(G^{-1}(y))}\, ,
$$
analogously to the proof of Corollary \ref{cor:GenCauchy}.
Note that $F$ and $G$ are continuously, strictly increasing bijections of $[0,1]$.
Using (\ref{eq:EL}), we see that
\begin{align*}
\ln v(x,y)\, 
&=\, \ln u(F^{-1}(x),G^{-1}(y)) - \ln f(F^{-1}(x)) - \ln g(G^{-1}(y))\\
&=\, -\ln \mathcal{Z} - \beta \int_{\mathcal{X}} h((F^{-1}(x),G^{-1}(y)),(x',y')) u(x',y')\, dx'\, dy'\, .
\end{align*}
Making the change-of-variables $x'' = F(x')$ and $y'' = F(y')$, we see that $dx' = dx''/f(F^{-1}(x''))$
and $dy' = dy''/g(G^{-1}(y''))$.
So we have
$$
\ln v(x,y)\, =\, - \ln \mathcal{Z} - \beta \int_{\mathcal{X}} h((F^{-1}(x),G^{-1}(y)),(F^{-1}(x''),G^{-1}(y''))) v(x'',y'')\, dx''\, dy''\, .
$$
But the Heaviside function satisfies $\theta(F(x)-F(x')) = \theta(x-x')$ for any continuous, strictly increasing function $F$.
For this reason,
$$
h((F^{-1}(x),G^{-1}(y)),(F^{-1}(x''),G^{-1}(y'')))\,
=\, h((x,y),(x'',y''))\, .
$$
In other words, $v$ also solves (\ref{eq:EL}), except that
$$
\int_0^1 v(x,y)\, dy\, =\, \int_0^1 v(x,y)\, dx\, =\, 1\, ,
$$
using the change-of-variables formula, again.
So uniqueness and the formula follows from Lemma \ref{lem:marginal}.
\end{proof}

\section{Proof of Main Result}

We now explain the minor details needed to go from Proposition \ref{prop:FSV} to a proof of Theorem \ref{thm:main}.
According to Fannes, Spohn and Verbeure's result, $\mu_{N}^{\beta}$ must converge weakly to a 
mixture of i.i.d., product measures, each of whose 1-particle marginal optimizes $S_1(\mu,\mu_0) - \frac{\beta}{2} \mu^2(h)$.
But $\mu_N^{\beta}$ has marginals on $(x_1,\dots,x_N)$ and $(y_1,\dots,y_N)$ equal to the product measures of $f(x)\, dx$
and $g(y)\, dy$, respectively.
Therefore, according to the weak law of large numbers (WLLN), we know that all the $\mu$'s in the support of the directing
measure for the limit of $\mu_N^{\beta}$, must have $x$ marginal equal to $f(x)\, dx$
and $y$ maginal $g(y)\, dy$.
Hence, this constraint can be imposed when looking for an optimizer.
This is actually a relevant comment because all optimizers, for all choices of {\it a priori} measure $\mu_0$, have the same
value/pressure: that due to the Mallows measure on $S_n$.
For concreteness, we will now take $f=g=1$.

Now suppose that $\mu$ optimizes the Gibbs formula.
It must be absolutely continuous with respect to $\mu_0$ in order to not have the relative entropy equal to $-\infty$.
So we can write 
$$
d\mu(x,y)\, =\, u(x,y)\, dx\, dy\, ,
$$
where $u(x,y)$ is absolutely continuous.
Choosing any continuous function $\phi : [0,1] \times [0,1] \to \R$, 
with
$$
\int_{\mathcal{X}} u(x,y) \phi(x,y)\, dx\, dy\, =\, 0\, ,
$$
we can take
$$
u_{\epsilon}(x,y)\, =\, (1+\epsilon \phi(x,y)) u(x,y)\, .
$$
For $|\epsilon| < 1/\|\phi\|_{\infty}$, we have that $u_{\epsilon}$ is a probability measure.
It is easy to see that
$$
S_1(\mu_{\epsilon},\mu_0)\, =\, S_1(\mu,\mu_0) - \epsilon \int \phi u \ln u
- \int [1+\epsilon \phi] \ln[1+\epsilon \phi] u\, .
$$
Since $1+\epsilon \phi$ is bounded away from $0$ (and infinity) for the $\epsilon$ we are considering,
it is clear that both integrals above are well-defined.
Moreover, it is clear that
$$
\int [1+\epsilon \phi] \ln[1+\epsilon \phi] u\, =\, o(\epsilon)\, ,
$$
because $\ln[1+\epsilon \phi] = \epsilon \phi + o(\epsilon)$ and $\int \phi u = 0$.
A similar calculation also shows that
$$
\mu_{\epsilon}^2(h)\, =\, \mu^2(h) + \epsilon \mu^2([\phi(x,y)+\phi(x',y')] h) + O(\epsilon^2)\, .
$$
Since $\mu$ is supposed to be the optimizer, the terms linear in $\epsilon$ must vanish:
$$
\int \phi u \ln u\, =\, \frac{\beta}{2}\, \mu^2([\phi(x,y) + \phi(x',y')] h)\, .
$$
Since $h$ is symmetric, by varying over all $\phi$ orthogonal to $u$, we deduce that
$$
u(x,y) \ln u(x,y)\, =\, \beta u(x,y) \int_{\mathcal{X}} h((x,y),(x',y')) u(x,y)\, dx'\, dy' + C u(x,y)\, ,
$$
for some constant $C$.
(The reason we cannot assume $C=0$ is because we left out one direction for $\phi$, namely
the direction parallel to $u$, so that there is an indeterminacy in this direction, as seen using the Riesz representation
theorem.)
In other words, we have just deduced equation (\ref{eq:EulerLagrange}).
On the other hand, we have also proved that this equation has a unique solution given by (\ref{eq:solution}).
Therefore, $\mu_N^{\beta}$ does converge weakly to the i.i.d., product measure of $\mu$, where $d\mu(x,y) = u(x,y)\, dx\, dy$.

Because of all this, if we take the empirical measure with respect to $\mu_N^{\beta}$,
$$
\frac{1}{N}\, \sum_{i=1}^{N} \delta_{(x_i,y_i)}\, ,
$$
then this does satisfy just the type of convergence claimed
in Theorem \ref{thm:main}.
But, taking the order statistics $X_1<\dots<X_N$ and $Y_1<\dots<Y_N$, we do have $(x_i,y_i) = (X_{\sigma_i},Y_{\tau_i})$
for some permutations $\sigma,\tau \in S_N$.
Moreover (by commutativity of addition)
$$
\frac{1}{N}\, \sum_{i=1}^{N} f(x_i,y_i)\, =\, 
\frac{1}{N}\, \sum_{i=1}^{N} f(X_i,Y_{\pi_i})\, ,
$$
where $\pi = \tau \sigma^{-1}$.
As noted before, $(X_1,\dots,X_N)$ and $(Y_1,\dots,Y_N)$ are distributed as the order statistics coming from Lebesgue
measure, the effect of the Hamiltonian is only present in the Mallow model $\mathbbm{P}_N^{\exp(-\beta/(N-1))}$-measure of $\pi$.
By the WLLN for the order statistics, we see that, defining
$$
g_N(x,y)\, =\, \sum_{i,j=1}^{N} f(X_i,Y_j) \mathbbm{1}_{((i-1)/N,i/N]}(x) \mathbbm{1}_{((j-1)/N,j/N]}(y)\, ,
$$
we have that the random function $g_N$ converges in probability to $f$, everywhere in $(0,1]\times (0,1]$.
Therefore, since
$$
\frac{1}{N}\, \sum_{i=1}^{N} f(x_i,y_i)\, =\, \frac{1}{N}\, \sum_{i=1}^{N} g_N(i/N,\pi_i/N)\, ,
$$
we do deduce the theorem from the corresponding result for $\mu_N^{\beta}$.
Finally note that taking $\exp(-\beta/(n-1))$ versus $1-\beta/n$ in the theorem does not matter,
since the probability measures are continuous with respect to $\beta$, and $\exp(-\beta/(n-1)) = 1 - \beta(1+o(1))/n$.

\section{Applications}
\label{sec:Applications}

The ground state of the $\mathcal{U}_q(\mathfrak{sl}_2)$-symmetric XXZ quantum spin system, and the invariant measures
of the asymmetric exclusion process on an interval can be obtained from $\mathbbm{P}_N^{q}$.
See Koma and Nachtergaele's paper \cite{KomaNachtergaele} and Gottstein and Werner's paper \cite{GottsteinWerner} 
for information about the XXZ model.
For information about the blocking measures and the asymmetric exclusion process, we find it convenient to refer to Benjamini, Berger, Hoffman and Mossel (BBHM), \cite{BBHM}.
The reader can easily deduce information for the XXZ model, since there is a perfect dictionary between
these two.
An excellent reference for this is Caputo's review \cite{Caputo}.

An interesting perspective on the ground state of the quantum XXZ ferromagnet was discovered by Bolina, Contucci and Nachtergaele
in \cite{BCN}.
They viewed the ground state of the quantum spin system as a thermal Boltzmann-Gibbs state for a classical
model at inverse temperature $\beta = \ln(q^{-2})$.
The state space they considered was the set of all up-right paths from $(0,0)$ to $(m,n) \in \mathbb{Z}^2$ (with $m,n\geq 0$).
The Hamiltonian energy function for such a path is the energy under the path, and above the $x$-axis.
Note that the Hamiltonian for the Mallows model also has a graphical representations as the number of ``crossings'' 
of the permutation.
Using their representation, they explained some symmetries of the ground state of the XXZ model,
and obtained estimates which were later useful in their follow-up paper, \cite{BCN2}.
The two models are related, but only the Mallows model is manifestly a mean-field model.

We consider the (nearest neighbor) asymmetric exclusion process on $\{1,\dots,N\}$, with hopping rate to the left $p$
and hopping rate to the right $1-p$, and $q = (1-p)/p$.
We no longer use $p$ or $p_N$ for the pressure, instead we use it for the hopping rate as expressed above.
As BBHM explain, the invariant measure of the ASEP is a push-forward of $\mathbbm{P}_N^q$.
Given a permutation $\pi \in S_N$ and a particle configuration $\eta = (\eta_1,\dots,\eta_N) \in \{0,1\}^N$, let 
$\pi \eta = (\eta_{\pi_1},\dots,\eta_{\pi_N})$.
Let $(1^k,0^{N-k}) = (1,\dots,1,0,\dots,0)$ with $k$ $1$'s and $N-k$ $0$'s.
Then, taking a random permutation $\pi$, distributed according to $\mathbbm{P}_N^q$, and letting
$$
\eta^{(k,N-k)} = \pi (1^k,0^{N-k})\, ,
$$
the law of $\eta^{(k,N-k)}$ is the invariant measure for the ASEP, with $k$ particles and $N-k$ holes.
As BBHM explain, this is an instance of Wilson's general height function approach to tiling and shuffling \cite{Wilson}\footnote{Because 
of this, let us note that the ground state of the XXZ model is also a 
projection, or marginal, of the Mallows model for permutations (using the correspondence between
the ASEP and the XXZ model \cite{Caputo}).
This raises an interesting point for further consideration: are other integrable models
projections of mean-field models?}.

The question we can answer is the non-random limiting density of $\eta^{(k,N-k)}$ in the scaling limit, $N \to \infty$, $p_N = \frac{1}{2} + \beta/4N$,
$k_N = \lfloor{y N}\rfloor$.
(Note that this corresponds to $q_N = 1- \beta(1+o(1))/N$.)
Namely, for a continuous function $f : [0,1] \to \R$, we have
$$
\lim_{\epsilon \downarrow 0} \lim_{N \to \infty} \mathbbm{P}^{1-\beta/N}_{N}\left\{
\left| \frac{1}{N}\, \sum_{i=1}^N f(i/N) \eta^{(\lfloor{y N}\rfloor,\lceil{(1-y)N}\rceil)}_i - \int_0^1 f(x) \rho(x;y)\, dx\right| > \epsilon\right\}\, =\, 0\, ,
$$
for all $y \in [0,1]$, where
$$
\rho(x;y)\, =\, \int_0^y u(x,y')\, dy'\, .
$$
The scaling $p_N = \frac{1}{2}  + \beta/4N$ is the regime typically called ``weakly asymmetric.''
See, for example, Enaud and Derrida's paper \cite{EnaudDerrida}, following the matrix method used, for example by Derrida,
Lebowitz and Speer \cite{DerridaLebowitzSpeer}.
Note that while they considered the nonequilibrium case, we consider the particle conserving, equilibrium case.
On the other hand, we are sure that the formula above is known.

The integral for $\rho(x;y)$ is readily evaluated.
Setting $\phi$, $\psi$, $\Phi$, $\Psi$ and $\alpha$ as in Lemma \ref{lem:marginal},
\begin{align*}
\rho(x;y)\, 
&=\, \int_0^{y} \frac{\alpha \phi(x) \psi(y')}{(\alpha  - \beta \Phi(x) \Psi(y'))^2}\, dy'\\
&=\, \frac{\alpha \phi(x)}{\beta \Phi(x) (\alpha - \beta \Phi(x) \Psi(y'))} \bigg|_0^y\\
&=\, \frac{\phi(x) \Psi(y)}{\alpha - \beta \Phi(x) \Psi(y)}\, .
\end{align*}
Substituting in, and doing minor algebraic simplifications, we obtain
$$
\rho(x;y)\, =\, \frac{(1-e^{-\beta y}) e^{-\beta x}}{(1-e^{-\beta}) - (1-e^{-\beta x})(1-e^{-\beta y})}\, .
$$
From this formula it is obvious that the $\beta \to 0$ limit recovers $\rho(x;y) \equiv y$, as it should (for the symmetric case).
Also, after further ``simplifications,'' we obtain
$$
\rho(x;y)\, =\, \frac{e^{\beta(\frac{1}{2}-x)/2} \sinh(\beta y/2)}{e^{\beta/4} \cosh(\beta [x-y]/2) - e^{-\beta/4} \cosh(\beta[x+y-1]/2)}\, .
$$
In particular, one can observe that the particle-hole/reflection symmetry is manifest in this formula due to the invariance under the transformation 
$(\beta,x) \mapsto (-\beta,1-x)$.

Finally, we note that we can partially undo the scaling limit by taking $\beta \to \infty$ with $x=y+t/\beta$ (assuming $0<y<1$).
Approximating $\sinh(\beta y/2) \approx \frac{1}{2} e^{\beta y/2}$ and noting that $e^{-\beta/2} \cosh(\beta [x+y-1]/2) \to 0$ since
$|x+y-1|<1$, we obtain
$$
\rho(x;y) \to \frac{1}{1+e^t}\, .
$$
This is not correctly normalized due to the fact that $dx = dt/\beta$, and $\beta \to \infty$.
On the other hand, this does recover the actual lattice scaling limit for the density (modulo a reflection),
as has been previously calculated for the XXZ model by Dijkgraaf, Orlando and Reffert in Appendix A of \cite{DijkgraafOrlandoReffert}.

\section*{Acknowledgements}

This research was supported in part by a U.S.\ National Science Foundation
grant, DMS-0706927.
I am very grateful to the following people for useful discussions and suggestions:
S.~G.~Rajeev, Alex Opaku, Carl Mueller, Bruno Nachtergaele, Wolfgang Spitzer and Pierluigi Contucci.
I also thank the anonymous referees for their useful suggestions for improvement.

\baselineskip=12pt

\end{document}